\newcommand{\indsets}{{\mathcal{I}}}
\begin{document}

\title{Buyback Problem - Approximate matroid intersection with cancellation costs}

\author{Ashwinkumar B.V.}
\institute{Cornell University, Ithaca, NY \\ \email{ashwin85@cs.cornell.edu}}

\maketitle

\begin{abstract}
In the buyback problem, an algorithm observes a sequence of bids and must decide whether to accept each bid at the moment it arrives, subject to some constraints on the set of accepted bids. Decisions to reject bids are irrevocable, whereas decisions to accept bids may be canceled at a cost that is a fixed fraction of the bid value. Previous to our work, deterministic and randomized algorithms were known when the constraint is a matroid constraint. We extend this and give a deterministic algorithm for the case when the constraint is an intersection of $k$ matroid constraints. We further prove a matching lower bound on the competitive ratio for this problem and extend our results to arbitrary downward closed set systems. This problem has applications to banner advertisement, semi-streaming, routing, load balancing and other problems where preemption or cancellation of previous allocations is allowed.
\end{abstract}

\section{Introduction}
\label{sec:intro}
Consider the online problem of resource allocation in which preemption is allowed. This kind of problem has been heavily studied in a wide variety of settings which range from advertisement allocations to routing to load balancing. In online weighted resource allocation without preemption we cannot get any non-trivial worst case guarantee on the sum of weights allocated. Consider the simplest problem of choosing the maximum number in a sequence. Any deterministic or randomized algorithm cannot have a constant competitive ratio up to any factor. Usually this impossibility is circumvented in the literature by placing some restrictions. This could be by allowing the input to be either a random permutation \cite{BIK07,BIKK07,Dyn63,KP-icalp09,K-soda05} or drawn iid from some probability distribution \cite{HKS-aaai07,CHMS-stoc10}. Other approach \cite{BHK-ec09,CFMP-soda09,FKMMP-wine09,AA-soda99,CS-stoc95,ABM-spaa03,GG-info92,GGKMY-jalgorithms97} which does not relax any conditions on the input assumes that either preemption is allowed or preemption with a penalty is allowed and gives guarantee for every input. In this paper we study this kind of relaxation. 


Consider the following generic problem. There is a set system $\mathcal{I}$ (downward closed) for the ground set $\mathcal{E}$. Elements from $\mathcal{E}$ are presented to the algorithm in a sequential manner. Each element $e_i$ is also associated with a utility $w_{e_i}$. When element $e_i$ is presented to the algorithm it must be accepted or rejected immediately. When $e_i$ is accepted the algorithm could cancel (preempt) some of the previously accepted elements. If $S$ denotes the set currently accepted, then the constraint for the algorithm is to have $S\in \mathcal{I}$. The utility of the algorithm is the utility of the accepted elements minus the penalty paid to the canceled elements. All the canceled elements are paid a penalty proportional to their corresponding utility. We present some applications of this generic problem below.

\subsubsection{Banner advertisement} The buyback problem was first defined and studied in \cite{BHK-ec09,CFMP-soda09}. Specifically they give deterministic algorithms for the case when $\mathcal{I}$ is a matroid. This was later extended by \cite{AK-wine09} which gave a randomized algorithm with better competitive ratio. Consider an advertisement system for a single advertisement slot. In certain systems bidding for this slot starts well in advance. In such a system bidders come and bid in an online manner. The system accepts the bids or rejects them immediately. The system could later accept much higher bids and reject previously accepted ones. But this causes a loss for the previously accepted bidders. Hence the system pays them back with a penalty. The work in \cite{BHK-ec09,CFMP-soda09,AK-wine09} also generalizes to much more general systems where the accepted bids can form a matroid. They leave open the question of finding algorithms for more general constraints. One of the key constraints not modeled by this is when each bidder desires a single item among a set of items, i.e when $\mathcal{I}$ is a valid matching in a bipartite graph. Our result in section \ref{s:preliminaries} solves this as well as a generalization to $k$ arbitrary matroid constraints and our result in section \ref{s:extensions} generalizes this to any downward closed set system. We also prove matching lower bounds in section \ref{s:lowerbound}. It is important to note that we ignore the incentives throughout our work and assume that bids/values are truthfully reported. 

\subsubsection{Free Disposal} Consider the problem of online ad allocation with free disposal studied in \cite{FKMMP-wine09}. Here we have a set of advertisers $A$ known in advance together with an integer impression contract $n(a)$ for each advertiser $a\in A$. $n(a)$ denotes the maximum number of impressions for which advertiser can be charged. When an impression $i\in I$ arrives the utility of this impression $w_{i_a}$ for every advertiser $a$ is also revealed. The final utility of the algorithm is the total amount charged to each user. We note that there is a very straight forward reduction from this problem to the problem we define, specifically from the case when $k=2$ and zero penalty for preemption($f=0$). Our algorithm gives a 5.828 competitive ratio. \cite{FKMMP-wine09} gives an unconditional $2$ competitive and conditional $e/(e-1)$ competitive algorithm. Note that our algorithm is tight for the generic problem defined due to the lower bound shown in section \ref{s:lowerbound}. \cite{FKMMP-wine09} is able to give better competitive ratio as this problem is more restrictive than the generic problem we define.

\subsubsection{Semi Streaming} Recently few papers \cite{G-approx05,FKMSZJ-soda05,DFRA-soda06,CM-pods05,SGPR-pods08,BGR-tcs08,FKMSZ-icalp04} have studied graph problems in the semi-streaming model. Consider the problem of finding a weighted matching in a stream which uses $\tilde{O}(n)$ memory, $O(1)$ update time and 1 pass. \cite{FKMSZ-icalp04} introduced this problem and gave a factor 6 approximation. \cite{G-approx05} improved this to $5.828$ approximate semi-streaming algorithm. Both of these algorithms are characterized by the fact that they always maintain a valid matching. Hence our lower bound in section \ref{s:lowerbound} proves that among the class of algorithms which maintain only edges of a valid matching no algorithm can achieve better than approximation ratio $5.828$.

\subsubsection{Routing with preemption and Load Balancing} There has been a huge literature \cite{AA-soda99,CS-stoc95,ABM-spaa03,GG-info92,GGKMY-jalgorithms97} on Routing in Networks and Load Balancing where preemption of previously allocated resources in allowed. Many of these results studied can be very succinctly generalized by the following problem.

Consider an arbitrary downward closed set system $\mathcal{I}$. Elements $e_i\in \mathcal{I}$ are presented to the algorithm along with its weight and the sets to which it belongs to. The algorithm has to either accept or reject the element immediately. An accepted element can be preempted or canceled later but canceled/rejected element cannot be taken later.

This problem is precisely the problem we study with the restriction of penalty for cancellation being 0. As we will show in section \ref{s:extensions}, such a generic problem does not have any algorithm with competitive ratio better than $n-1$ even when penalty is 0. Also a trivial extension of the algorithm from \cite{BHK-ec09} will give a competitive ratio of $(n-1)(1+2f+2\sqrt{f(1+f)})$. One should note that the papers which study routing with preemption and load balancing are able to achieve better competitive ratio by exploiting some additional structure in the problem.

\subsubsection{} The offline problem of matroid intersection was introduced and studied in \cite{J-computing76,KH-ann78}. The problem has also been studied under more general submodular utility functions in the CS theory/OR literature. Some of the recent papers in this direction are \cite{UMV-focs07,NWF-math78,RS-gecco07,LMNS-stoc09}. 

\subsubsection{Our Contributions} The algorithm we propose is a greedy type algorithm which is a natural generalization of the algorithm from \cite{BHK-ec09,CFMP-soda09}. While the algorithm is simple to state the analysis turns out to be much harder. There are two key technical contributions in this paper.
\begin{enumerate}
\item The main technical hurdle in analyzing the algorithm comes from bounding the utility of the final accepted set $S$ as compared to the optimal set $OPT$(Lemma \ref{mainlemma}). For this we develop a new novel type of charging scheme. This charging scheme is also aided by a graph construction which could be of independent interest in discrete optimization.
\item We also give optimal lower bounds. The main technical hurdle in this case is identifying the input on which the algorithm has worst case behavior. There is a key difference in the lower bound for $k=1$ case  (given in \cite{BHK-ec09}) and $k=2$ case. The general $k$ is quite similar to $k=2$. The matroid used in the lower bound for $k=1$ case in \cite{BHK-ec09} is just a uniform matroid of rank $1$. For the case $k=2$ we need a intersection of two partition matroids (i.e matching) which has a lot more structure. This is used to derive an infinite sequence $\{x_1,x_2,...\}$ with certain properties. 
\end{enumerate}
We also note here the portions of the paper which follow more easily from \cite{BHK-ec09}. In the case of analyzing the algorithm the part which bounds the penalty uses a geometric series argument and is quite akin to \cite{BHK-ec09}. Similarly in the lower bound once the sequence $\{x_1,x_2,...\}$ along with its properties are identified the rest of the analysis follows more easily.

\section{Preliminaries}\label{s:preliminaries}
First we define the problem formally 
\subsection{Model}
Consider a ground set of Elements $\mathcal{E}=\{e_1,e_2,\ldots,e_n\}$. Let $\mathcal{M}_1,\ldots,\mathcal{M}_k$\footnote{A matroid $\mathcal{M}_i=(\mathcal{E},\mathcal{I}_i)$ is constructed from a ground set $\mathcal{E}\neq \phi$ and a nonempty family of subsets of $\mathcal{E}$, called the independent subsets of $\mathcal{E}$, such that if $B\in \mathcal{I}_i$ and $A\subseteq B$ then $A\in \mathcal{I}_i$ ($\mathcal{I}_i$ is hereditary). Additionally, if $A,B\in \mathcal{I}_i$ and $|A|<|B|$, then there is some element $x\in B-A$ such that $A\cup \{x\}\in \mathcal{I}_i$ (exchange property).} be $k$ arbitrary matroid constraints on $\mathcal{E}$. Let the corresponding Independent sets be $\mathcal{I}_1,\ldots,\mathcal{I}_k$ and let $\mathcal{I}=\cap_{j=1}^k \mathcal{I}_j$. We define the online problem with the following constraints.
\begin{enumerate}
\item The elements of $\mathcal{E}$ are presented to the algorithm in some arbitrary order. The value $w_{e_i}$ of element $e_i$ and the matroid constraints it is involved with are revealed to the algorithm when the element is presented to it.
\item When element $e_i$ is presented it must be accepted or rejected immediately. Additionally it could be canceled at a later point in time. When an element is canceled the algorithm must pay a penalty $f\cdot w_{e_i}$ where $f$ is a constant. (Note the difference between reject and cancel)
\item Let $\mathcal{A}$ be the set of elements accepted and $\mathcal{R}$ be the set of elements accepted and later canceled. Then the utility of the algorithm is defined as $\sum_{e\in\mathcal{A}}w_e-(1+f)\sum_{e\in\mathcal{R}}w_e$. Note that all elements in $\mathcal{R}$ are also counted in $\mathcal{A}$. Moreover the currently maintained set $S=\mathcal{A}-\mathcal{R}$ must be an independent set.i.e $S\in \mathcal{I}$.
\end{enumerate}
Here we desire to find a competitive online algorithm with the above constraints.

\subsection{Algorithm}

\begin{figure}[t]
\begin{minipage}[t]{0.5\linewidth}
\centering
\label{alg:onlinematroidintersection}
{\bf Algorithm 1: Online Matroid Intersection:}
\vspace*{-1mm}
\begin{algorithmic}[1]
\STATE Initialize $S = \emptyset.$
\FORALL{elements $e_i$, in order of arrival,}
\IF{$S\cup \{e_i\}\in \indsets$}
\STATE $S=S\cup \{e_i\}$
\ELSE
\FORALL{$1\leq j\leq k$}
\IF{$S\cup \{e_i\}\notin \indsets_j$}
\STATE $e_{i_j}$ be the element of smallest value such that $S\cup \{e_i\}\setminus \{e_{i_j}\}\in \indsets_j$
\ELSE
\STATE $e_{i_j}=NULL$.
\ENDIF
\ENDFOR
\STATE Let $C_{e_i}=\cup_{j=1}^k\{e_{i_j}\}$
\IF{$w_{e_i}\geq r\cdot (\sum_{j=1}^k w_{e_{i_j}})$} 
\STATE $S=S\cup \{e_i\} \backslash \cup_{j=1}^k \{e_{i_j}\}$
\ENDIF
\ENDIF
\ENDFOR
\end{algorithmic}
\end{minipage}
\hspace{3mm}
\begin{minipage}[t]{0.48\linewidth}
\centering
\vspace*{0.0mm}
{\bf Algorithm 2: Online Matroid Intersection:}
\vspace*{-1mm}
\begin{algorithmic}[1]
\STATE Initialize $S = \emptyset$.
\FORALL{elements $e_i$, in order of arrival,}
\IF{$S\cup \{e_i\}\in \indsets$}
\STATE $S=S\cup \{e_i\}$
\ELSE
\STATE $S'$=Greedy on $S\cup \{e_i\}$ \footnote{Here Greedy means sorting and choosing the max weight elements satisfying the matroid constraints}
\STATE Let $C_{e_i}=S\cup \{e_i\}-S'$ (Rejected Elements)
\IF{$w_{e_i}\geq r\cdot (\sum_{e\in C_{e_i}} w_e)$}
\STATE $S=S'$
\ENDIF
\ENDIF
\ENDFOR
\end{algorithmic}
\end{minipage}
\caption{Algorithms for $k$ matroids intersection}
\label{fig:algorithms}
\end{figure}

The algorithm is shown in Figure 1 as Algorithm 1. At each step the algorithm maintains an Independent set S. Assume it sees the element $e_i$ at some step. If $S\cup \{e_i\}$ is also an independent set, then it includes $\{e_i\}$ into the current set $S$. Otherwise $S\cup \{e_i\}$ has a circuit in some of the matroids $\indsets_j$. It first finds the minimum value element ($e_{i_j}$) it must remove in set S to make $S\cup \{e_i\}$ an independent set in each of $\indsets_j$. Now suppose $w_{e_i}\geq r\cdot (\sum_{j=1}^k w_{e_{i_j}})$, then it includes the element $e_i$ and discards the elements $\cup_{j=1}^k \{e_{i_j}\}$. We will prove that the above algorithm is $\frac{(k\cdot r-1)r}{r-1-f}$ competitive. Here $r$ is a constant defined later to optimize the competitive ratio.

The Algorithm 2 in Figure 1 is an equivalent formulation of Algorithm 1. It is not difficult to show that steps 6-12 of Algorithm 1 are equivalent to step 6 of Algorithm 2. For ease of analysis we will just analyze Algorithm 1. 

\section{Analysis of the algorithm} \label{s:analysis}
Let $S(i)$ be the set $S$ at the end of step $i$ and let $OPT\subseteq \mathcal{E}$ be optimal solution to the weighted intersection of $k$ matroids. The main part of competitive analysis is based on the following lemma. 
\begin{lemma} \label{mainlemma}
$w(S(n)) \frac{(k\cdot r-1)r}{r-1}\geq w(OPT)$ where $w(S(n))=\sum_{e\in S(n)}w_e$.
\end{lemma}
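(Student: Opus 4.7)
The plan is to bound $w(OPT)$ via a two-step argument: first control the total weight of all elements the algorithm ever inserted into $S$ in terms of $w(S(n))$, and then charge $w(OPT)$ onto $S(n)$ via a matroid-exchange-based graph construction.

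Let $\mathcal{A}$ denote the set of elements the algorithm ever added to $S$, so $S(n)\subseteq\mathcal{A}$ and $\mathcal{A}\setminus S(n)$ consists of the elements that were accepted and later canceled. Each admission of a new element $e'$ satisfies $w_{e'}\ge r\sum_{j=1}^{k}w_{e'_j}$, where $e'_1,\ldots,e'_k$ are the (at most $k$) incumbents that $e'$ displaces. Treating each displaced $e'_j$ as a child of $e'$ yields a forest on $\mathcal{A}$ whose roots are $S(n)$; let $T(s)$ be the tree rooted at $s\in S(n)$. A short induction on depth shows that the total weight at depth $d$ in $T(s)$ is at most $w_s/r^d$ (the $r$-factor per level dominates the branching because we control weights, not counts), so $w(T(s))\le \frac{r}{r-1}w_s$ and hence $w(\mathcal{A})\le\frac{r}{r-1}\,w(S(n))$. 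This step parallels the geometric-series argument in the $k{=}1$ analysis of \cite{BHK-ec09}.

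Next I charge each $o\in OPT$ to $S(n)$. For $o\in OPT\cap\mathcal{A}$, charge $w_o$ to the root $s(o)$ of its tree. For $o\in OPT\setminus\mathcal{A}$ (so $o$ was rejected on arrival), the failed acceptance check gives $w_o<r\sum_{j=1}^{k}w_{o_j}$, where $o_1,\ldots,o_k\in\mathcal{A}$ are minimum-weight $\mathcal{I}_j$-circuit elements of $S_o\cup\{o\}$; charge $r\,w_{o_j}$ to $s(o_j)$ for each $j$. The chain charges on each $s$ sum to at most $w(T(s))$, but the rejection charges are the delicate part: a single $a\in\mathcal{A}$ can be the chosen blocker for many rejected $OPT$-elements (e.g.\ in a uniform matroid where all potential blockers tie in weight and the algorithm's tie-breaking keeps picking the same one), so a naive per-element bound does not give the right ratio.

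The novel graph construction enters here to redistribute blockers using matroid structure. Exploiting the freedom to pick any minimum-weight element in the circuit, for each matroid $j$ I would build a bipartite graph with left side $OPT\setminus\mathcal{A}$ and right side $\mathcal{A}$, with an edge $(o,a)$ whenever $a\in S_o$ belongs to the $\mathcal{I}_j$-circuit of $S_o\cup\{o\}$ and attains the minimum weight there. Using that both $OPT$ and each historical state $S_o$ are independent in $\mathcal{I}_j$, a Brualdi/Hall-type argument for matroids should yield an injection $b_j\colon OPT\setminus\mathcal{A}\to\mathcal{A}$ compatible with these edges. Substituting $b_j(o)$ as the designated $j$-th blocker (still a valid minimum-weight choice, so the rejection inequality still holds) controls the per-element congestion on $\mathcal{A}$ and gives $\sum_o\sum_j r\,w_{b_j(o)}\le kr\cdot w(\mathcal{A})$. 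Combining yields the preliminary bound $w(OPT)\le w(OPT\cap\mathcal{A})+kr\cdot w(\mathcal{A})\le (kr+1)\cdot\frac{r}{r-1}\,w(S(n))$; extracting the stated constant $\frac{(kr-1)r}{r-1}$ rather than the naive $\frac{(kr+1)r}{r-1}$ requires a careful accounting that exploits $OPT\in\bigcap_j\mathcal{I}_j$ to avoid double-counting $OPT$-elements that simultaneously lie in $\mathcal{A}$ and function as blockers for other $OPT$-elements. I expect the matroid matching step — establishing that $b_j$ exists as an injection — to be the main obstacle; once that is in hand, the remainder is aggregation and bookkeeping.
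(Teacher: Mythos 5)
Your high-level shape — a geometric-series bound on the displacement forest plus a Hall-type matching to control how often a single incumbent serves as blocker — is genuinely the same idea as the paper's, and your first step ($w(\mathcal{A})\le\frac{r}{r-1}w(S(n))$) is correct. But there are two concrete gaps. First, the bipartite graph you define does not satisfy Hall's condition: you only put an edge $(o,a)$ when $a$ \emph{attains the minimum weight} in the circuit of $S_o\cup\{o\}$. Take a rank-2 uniform matroid with $S=\{a,b\}$, $w_a=1$, $w_b=100$, and two independent $OPT$-elements $o_1,o_2$ of weight $50$ arriving and being rejected: both have $a$ as their unique minimum-weight blocker, so $N(\{o_1,o_2\})=\{a\}$ and no injection exists. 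The fix (and what the paper does) is to put edges from $o$ to \emph{every} element of the circuit other than $o$; then $N_p(\hat{S})$ spans $\hat{S}$, and since $\hat{S}\subseteq OPT$ is independent the rank inequality $|\hat{S}|=rank_p(\hat{S})\le rank_p(N_p(\hat{S}))\le|N_p(\hat{S})|$ gives Hall. The inequality you need still survives because any circuit element weighs at least the minimum one. Note also that the paper's graphs are dynamic: when a blocker is canceled its incident edges are redirected to the new circuit created by the displacing element, and canceled $OPT$-elements re-enter the left side; your scheme sidesteps part of this by charging $OPT\cap\mathcal{A}$ to tree roots, which is fine for that piece.

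Second, and more importantly, even with the matching repaired your accounting gives $w(OPT)\le(kr+1)\frac{r}{r-1}w(S(n))$, which does \emph{not} imply the stated bound $\frac{(kr-1)r}{r-1}$ (and would yield a competitive ratio that no longer matches the lower bound). You flag this, but the repair you guess at ("avoid double-counting $OPT$-elements that are also blockers") is not where the savings comes from. The paper gets $kr-1$ from three coupled ingredients: (i) a \emph{strengthened} Hall condition $|N_p(\hat{S})\setminus OPT|\ge|\hat{S}|$ (proved by splitting off $W=N_p(\hat{S})\cap OPT$ and using submodularity of rank), so the matching lands entirely outside $OPT$ and elements of $S(n)\cap OPT$ receive no transferred charge beyond their own weight; (ii) the observation that an element canceled from $S$ is deleted from $P_2(p)$ for exactly one $p$, hence can be a match target in at most $k-1$ of the $k$ graphs, giving canceled elements the smaller budget $\frac{(k-1)r^2}{r-1}w_e$; and (iii) a single forward-propagating recursion in which a canceled element's accumulated charge is absorbed by its displacer using $\sum_j w_{e_{i_j}}\le w_{e_i}/r$, rather than your decoupled product of "charge per unit of $\mathcal{A}$" times "$w(\mathcal{A})/w(S(n))$", which is intrinsically lossy. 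Without (i)--(iii) the stated constant is out of reach, so as written the proposal proves only a strictly weaker lemma.
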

We will prove Lemma \ref{mainlemma} in \ref{chargeSection}. The proof is based on a new type of charging scheme. For now we just assume it to analyze the competitive ratio of the algorithm. We once again note that the main technical contribution is in analyzing Lemma \ref{mainlemma} and given Lemma \ref{mainlemma} the analysis of Theorem \ref{theoremonenew}(i.e bounding the penalty) is very similar to \cite{BHK-ec09}.

\begin{theorem} \label{theoremonenew}
The online algorithm with cancellations for $k$ matroid constraints has a competitive ratio $c=\frac{(k\cdot r -1)r}{r-1-f}$. This ratio is minimized when $\frac{r}{1+f}=1+\sqrt{1-\frac{1}{k(1+f)}}$ and has a value $c=k(1+f)(1+\sqrt{1-\frac{1}{k(1+f)}})^2$
\end{theorem}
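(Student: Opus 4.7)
The plan is to reduce the theorem to combining Lemma \ref{mainlemma} with a separate upper bound on the total cancellation penalty, and then to optimize the resulting expression in $r$. First, by the definitions in the model, the algorithm's utility equals $w(\mathcal{A})-(1+f)w(\mathcal{R})=w(S(n))-f\cdot w(\mathcal{R})$ since $S(n)=\mathcal{A}\setminus \mathcal{R}$. So the competitive ratio equals $w(\OPT)/(w(S(n))-f\cdot w(\mathcal{R}))$, Lemma \ref{mainlemma} already controls the numerator, and the only remaining quantity to bound is $w(\mathcal{R})$ in terms of $w(S(n))$.

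For this I would build a \emph{cancellation forest} on $\mathcal{A}$: the parent of each $e\in \mathcal{R}$ is the element whose arrival triggered its removal, so the roots are exactly the elements of $S(n)$ and every node has at most $k$ children (one per matroid). The key quantitative fact is the acceptance test on line 14 of Algorithm 1: whenever an element $e$ triggers cancellations, the total weight of the canceled children is at most $w_e/r$. Applying this level by level in a tree rooted at $u\in S(n)$, the aggregate weight at depth $i+1$ is at most $1/r$ times the aggregate weight at depth $i$, so summing the geometric series,
\[
\sum_{e\in \mathrm{tree}(u),\, e\neq u} w_e \;\leq\; w_u\sum_{i\geq 1}(1/r)^i \;=\; \frac{w_u}{r-1}.
\]
Summing over all roots gives $w(\mathcal{R})\leq w(S(n))/(r-1)$. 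Hence the utility is at least $w(S(n))\cdot (r-1-f)/(r-1)$, and combined with $w(\OPT)\leq w(S(n))\cdot (kr-1)r/(r-1)$ from Lemma \ref{mainlemma}, the competitive ratio is at most $(kr-1)r/(r-1-f)$ as claimed. (This geometric-series step is essentially the one from \cite{BHK-ec09}; the novelty sits in Lemma \ref{mainlemma}.)

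For the optimization, I would differentiate $c(r)=(kr^{2}-r)/(r-1-f)$ and set the numerator of $c'(r)$ to zero. A short manipulation reduces this to the quadratic $kr^{2}-2k(1+f)r+(1+f)=0$, whose larger root gives $r/(1+f)=1+\sqrt{1-1/(k(1+f))}$. Writing $\alpha=\sqrt{1-1/(k(1+f))}$, we have $r-1-f=(1+f)\alpha$ and the defining identity $k(1+f)(1-\alpha^{2})=1$, which collapses $c(r)=(kr-1)r/(r-1-f)$ to the clean closed form $c=k(1+f)(1+\alpha)^{2}$. The main (and really only) obstacle in this plan is Lemma \ref{mainlemma} itself; once that is in hand, the bookkeeping above is routine.
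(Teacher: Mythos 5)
Your proposal is correct and follows essentially the same route as the paper: the paper's recursively defined penalty charge $P(e_i)$ with the inductive bound $P(e_i)\leq \frac{f}{r-1}w_{e_i}$ is exactly your cancellation-forest/geometric-series argument, and the combination with Lemma \ref{mainlemma} and the optimization over $r$ match as well. No gaps.
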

\begin{proof} The competitive ratio of our algorithm matches the case $k=1$ given in \cite{BHK-ec09,CFMP-soda09}. Later in section \ref{s:lowerbound} we will show that this is tight for every $k$.

The utility of the algorithm comprises of two terms. One is due to utility of $S(n)$ and the other is the penalty due to canceled set $R$. 
\begin{itemize}
\item For each element $e_i$ we define a value $P(e_i)$ recursively. If $e_i$ was accepted in step 3 or was never accepted, then $P(e_i)=0$. Else if elements $C_{e_i}=\{e_{i_1},e_{i_2},\ldots,e_{i_k}\}$ were canceled, then $P(e_i)=f\cdot \sum_{j=1}^k w_{e_{i_j}}+\sum_{j=1}^k P(e_{i_j})$. Now each canceled item $e_{i_j}$'s penalty is accounted to the item $e_i$ which canceled it. We prove that for any element $e_i$ the total penalty accounted is less than or equal to $\frac{f}{r-1}\cdot w_{e_i}$. The proof is by induction. The base case when $P(e_i)=0$ is simple. The inductive case is as follows.
\begin{eqnarray}
P(e_i)=& f\cdot \sum_{j=1}^k w_{e_{i_j}}+\sum_{j=1}^k P(e_{i_j}) \nonumber \\
\leq& f\cdot \sum_{j=1}^k w_{e_{i_j}}+\sum_{j=1}^k f\cdot \frac{w_{e_{i_j}}}{r-1} \nonumber \\
=& \frac{fr}{r-1}\cdot \sum_{j=1}^k w_{e_{i_j}} \nonumber \\
\leq& \frac{fr}{r-1} \frac{w_{e_i}}{r} \nonumber \\
=& f\cdot \frac{w_{e_i}}{r-1}
\end{eqnarray}
Hence the total penalty is at-most $\sum_{e_i\in S(n)}f\cdot \frac{w_{e_i}}{r-1}=\frac{f\cdot w(S(n))}{r-1}$
\item The final weight of the set $S(n)$ is bounded by Lemma \ref{mainlemma}. Combining the two parts we get the total utility of the algorithm.
\begin{eqnarray}
\textrm{Utility}\geq& w(S(n))-f\cdot \frac{w(S(n))}{r-1} \nonumber \\
=& \frac{r-1-f}{r-1}\cdot w(S(n)) \nonumber \\
\geq& \frac{r-1-f}{(k\cdot r-1) r}\cdot w(OPT) &\textrm{(Using Lemma \ref{mainlemma})}
\end{eqnarray}
\end{itemize}
Hence we get competitive ratio of $c=\frac{(k\cdot r-1)r}{r-1-f}$. Optimizing over $r$ we get $\frac{r}{1+f}=1+\sqrt{1-\frac{1}{k(1+f)}}$ and $c=k(1+f)(1+\sqrt{1-\frac{1}{k(1+f)}})^2$.
\end{proof}

\subsection{Charging scheme} \label{chargeSection}
Here we will prove Lemma \ref{mainlemma}. This portion is technically the hardest part of the paper and requires developing a new charging scheme. This is aided by a graph construction. We first give some notation. Each element carries two kinds of charges($\mathsf{ch_1}$ and $\mathsf{ch_2}$). Let at any step $j$, $\mathsf{ch}(e_i,j)=\mathsf{ch_1}(e_i,j)+\mathsf{ch_2}(e_i,j)$. Additionally let $C_{e_i}=\cup_{j=1}^k e_{i_j}$(the set of elements discarded when $e_i$ is included) be as defined in step 13 of Algorithm 1. Let $S(j)$ denote the set $S$ after step $j$. Let $\mathsf{ch}(S',j)=\sum_{e\in S'} \mathsf{ch}(e,j)$(analogously for $\mathsf{ch_1}$ and $\mathsf{ch_2}$).

\subsubsection{Sketch}
We start with a total charge of $OPT$ on the elements. $\mathsf{ch_1}(e_i,j)$ denotes the charge which the element carries from the beginning. $\mathsf{ch_2}(e_i,j)$ denotes the charge which the elements gets from some other element. At any step either $e_i$ or $C_{e_i}$ is discarded. When any element is discarded all its charge is re-added to $S(i)$(or $S(j)$ for $j\geq i$). There by all the charge is stored in $S(n)$. Next we bound the amount of charge any element can carry.

There are two ways charge is transferred. One way is for $\mathsf{ch_1}$ and another for $\mathsf{ch_2}$. At step $i$ if $C_{e_i}$ is discarded, then we always transfer $\mathsf{ch_2}(C_{e_i},i-1)$ to $\mathsf{ch_2}(e_i,i)$. Another way of transfer is for $\mathsf{ch_1}$. This is done by a fairly sophisticated graph construction. Essentially we construct $k$ bipartite graphs and then prove that each one has a matching that matches all vertices on the left side of the bipartition using Hall's Theorem. Suppose $e_i$ is matched to $e_j$, we transfer $\mathsf{ch_1}(e_i,a)$ to $\mathsf{ch_2}(e_j,b)$. Here $a$ denotes the step at which $e_i$ was removed and $b$ denotes the step before which $e_j$ was removed($n$ otherwise). Note that we will need causal consistency($a<b$) for this.

The charges at the beginning of the algorithm are defined as follows.
\begin{itemize}
\item if $e_i\in OPT$ then $\mathsf{ch}(e_i,0)=\mathsf{ch_1}(e_i,0)=w_{e_i}$ and $\mathsf{ch_2}(e_i,0)=0$
\item if $e_i\notin OPT$ then $\mathsf{ch}(e_i,0)=\mathsf{ch_1}(e_i,0)=\mathsf{ch_2}(e_i,0)=0$.
\end{itemize}
Before defining the charging scheme we define a graph construction which will aid us in the charging scheme. 
\subsubsection{Graph Construction}
Construct $k$ bi-partite graphs as the algorithm proceeds. Here $p^{th}$ graph corresponds to the $p^{th}$ matroid. Let $P_1(p)$ denote partite set 1 of $p^{th}$ graph and $P_2(p)$ denote partite set 2 of $p^{th}$ graph. Additionally let $N_p(S)$ denote the set of neighbors of $S\subseteq P_1(p)$ in $p^{th}$ graph. Let $rank_p(S)$ be the rank of set $S$ in the $p^{th}$ matroid.
\begin{enumerate}
\item The bi-partite graph starts empty and edges are added. Each end point of an edge corresponds to an element $e_i$. The node corresponding to an element $e_i$ exists only when corresponding edge is added and removed when all its adjacent edges are deleted. An edge in the graph corresponds to a potential $\mathsf{ch_1}$ transfer.
\item \label{delone} Consider step 14 in the algorithm. If $w_{e_i}<r\cdot \sum_{j=1}^k w_{e_{i_j}}$, then $e_i$ is not included in $S$. Now if $e_i\in OPT$, then add a node $e_i$ to $P_1(p)$ (for each $p$). Let $Ckt(e_i,p)$ be the unique circuit in $p^{th}$ matroid in $S\cup \{e_i\}$. Then add edge $e_i,e_j$ for each $e_j\in Ckt(e_i,p)-\{e_i\}$ with $e_j$ belonging to $P_2(p)$.
\item \label{deltwo} Consider step 14 in the algorithm. If $w_{e_i}\geq r\cdot \sum_{j=1}^k w_{e_{i_j}}$, then $C_i=\{e_{i_1},\ldots,e_{i_k}\}$ is deleted from $S$ and $e_i$ is included into it. Delete each $e_{i_p}$ from the corresponding $P_2(p)$\footnote{Note that $e_{i_p}$ is deleted only from $P_2(p)$ and not from $P_2(p')$ for $p'\neq p$}. For each existing edge $e_{q},e_{i_p}$ add edges $e_q,e_j$ for each $e_j\in Ckt(e_i,p)-\{e_{i_p}\}$ with $e_j$ belonging to $P_2(p)$. Additionally if $e_{i_p}\in OPT$(i.e $\mathsf{ch_1}(e_{i_p},0)>0$), then re-add it to $P_1(p)$. Add edges $e_{i_p},e_j$ for each $e_j\in Ckt(e_i,p)-\{e_{i_p}\}$.
\end{enumerate}
\begin{lemma} \label{properties}
The graph construction has the following properties.
\begin{enumerate}
\item $P_1(p)\subseteq OPT-S(n)$ for each $p$.
\item \label{span} $\forall \hat{S}\subseteq P_1(p)$,$N_p(\hat{S})$ spans $\hat{S}\subseteq P_1(p)$ in $p^{th}$ matroid.
\item \label{number} $\forall \hat{S}\subseteq P_1(p)$,$|N_p(\hat{S})-OPT|\geq|\hat{S}|$.
\item There exists a matching in graph $p$ such that every $e\in P_1(p)$ is matched
to a node in $P_2(p)-OPT$.
\item \label{prop:four} Any element $e\in \mathcal{E}-S(n)-OPT$ is matched in at-most $k-1$ of the graphs from the side of $P_2$.
\end{enumerate}
\end{lemma}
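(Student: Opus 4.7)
The plan is to establish the five properties in order, building on each other. Property~1 follows directly from inspecting the graph construction: nodes are added to $P_1(p)$ only in case~\ref{delone} (when $e_i\in OPT$ is rejected) and in case~\ref{deltwo} (when $e_{i_p}\in OPT$ is removed from $S$). Both cases involve elements of $OPT$, and the algorithm neither re-accepts a rejected element nor a removed one, so $P_1(p)\subseteq OPT-S(n)$.

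For property~2 I would induct on the steps of the algorithm. The base case of an empty graph is trivial. In case~\ref{delone} the new node $e_i$ is joined to $Ckt(e_i,p)-\{e_i\}$, which spans $e_i$ in matroid $p$ because $Ckt(e_i,p)$ is a circuit; together with the inductive hypothesis this yields the spanning property for any $\hat S$ containing $e_i$. The delicate case is~\ref{deltwo}. When $e_{i_p}$ is deleted from $P_2(p)$, every edge $(e_q,e_{i_p})$ with $e_q\in P_1(p)$ is replaced by edges from $e_q$ to $Ckt(e_i,p)-\{e_{i_p}\}$; since $Ckt(e_i,p)-\{e_{i_p}\}$ spans $e_{i_p}$ in matroid $p$, the new neighborhood of $\hat S$ has the same closure as the old one (which already contained $e_{i_p}$), preserving the spanning property. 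The re-addition of $e_{i_p}$ to $P_1(p)$ when $e_{i_p}\in OPT$ is handled by the same circuit argument, since its new neighbors $Ckt(e_i,p)-\{e_{i_p}\}$ span $e_{i_p}$.

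Property~3 follows by a matroid-rank computation. Fix $\hat S\subseteq P_1(p)$ and set $T=N_p(\hat S)$, $T_1=T\cap OPT$, $T_2=T-OPT$; note that $\hat S$ and $T$ are disjoint, since an element is never simultaneously in $P_1(p)$ and $P_2(p)$ (additions to $P_1(p)$ require the element to be outside $S$, while additions to $P_2(p)$ require it to be inside $S$, and once removed from $S$ an element never returns). By property~1, $\hat S\cup T_1\subseteq OPT$, which is independent in matroid $p$, so $rank_p(\hat S\cup T_1)=|\hat S|+|T_1|$. By property~2, $\hat S\subseteq\Span_p(T)$, hence $rank_p(T)\geq rank_p(\hat S\cup T_1)=|\hat S|+|T_1|$. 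The trivial bound $rank_p(T)\leq|T_1|+|T_2|$ then forces $|T_2|\geq|\hat S|$, which is exactly property~3. Property~4 is immediate from Hall's theorem applied to the bipartite subgraph between $P_1(p)$ and $P_2(p)-OPT$, since property~3 is precisely the Hall condition.

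For property~5, consider any $e\in\mathcal{E}-S(n)-OPT$. Elements are added to $P_2(p)$ only while they are currently in $S$ (the circuit $Ckt(e_i,p)$ sits inside $S\cup\{e_i\}$), so if $e$ ever appears in some $P_2(p)$ it must have been accepted to $S$ at some step. Since $e\notin S(n)$, it was later removed in case~\ref{deltwo} with $e$ playing the role of $e_{i^*_{p^*}}$ for a uniquely determined matroid $p^*$; that same action deletes $e$ from $P_2(p^*)$. Because $e$ never re-enters $S$ after removal, it is never re-added to $P_2(p^*)$, so $e\notin P_2(p^*)$ at the end and cannot be matched from the $P_2$ side of graph $p^*$, leaving at most $k-1$ graphs in which $e$ can be matched. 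The main technical obstacle is property~2, where the rewiring step requires careful tracking of circuit-spanning relationships through the dynamic graph updates; the remaining properties follow fairly cleanly from it.
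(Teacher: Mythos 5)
Your proof is correct and follows essentially the same route as the paper's: construction inspection for property~1, the circuit-spanning argument (propagated through the rewiring step) for property~2, the rank computation with $T_1=N_p(\hat S)\cap OPT$ for property~3, Hall's theorem for property~4, and the deletion-from-$P_2(p^*)$ observation for property~5. Your version is somewhat more explicit than the paper's (e.g., the disjointness of $\hat S$ and $N_p(\hat S)$, and the induction for property~2), but the underlying ideas are identical.
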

\begin{proof}
\begin{enumerate}
\item This is easily seen by construction. In steps \ref{delone} and \ref{deltwo} of Graph construction an element is added to $P_1(p)$ precisely when it is removed from $S$ and when it belongs to $OPT$. 
\item When an node $e_j$ is added to $P_1(p)$ then edges to each element in $Ckt-\{e_j\}$ are added. Hence
any node $e_j$ is spanned by $N_p(\{e_j\})$. By matroid property this implies that for any set $\hat{S}\subseteq P_1(p)$ we have that $N_p(\hat{S})$ spans $\hat{S}$. \footnote{Note that even though the edges could later be deleted, the span property still holds due to additional edges being added.}
\item Let $W=N_p(\hat{S})\cap OPT$. We now assert some statements from which the inequality easily follows.
\begin{itemize}
\item $rank_p((N_p(\hat{S})-OPT)\cup W)\geq rank_p(\hat{S}\cup W)$\footnote{$rank_p(S)$ is defined as the largest subset $A\subseteq S$ such that $A\in \mathcal{I}_p$}. This follows from property \ref{span}.
\item $rank_p(\hat{S}\cup W)=rank_p(\hat{S})+rank_p(W)=|\hat{S}|+|W|$. This follows from the fact that $\hat{S}$ and $W$ are disjoint and $\hat{S}\cup W\subseteq OPT$.
\item $rank_p((N_p(\hat{S})-OPT)\cup W)\leq rank_p(N_p(\hat{S})-OPT)+rank_p(W)\leq |N_p(\hat{S})-OPT|+|W|$. These set of inequalities follows from the matroid property.
\end{itemize}
Combining the above equations we get $|\hat{S}|\leq |N_p(\hat{S})-OPT|$.
\item Follows from Hall's Theorem and property \ref{number}. If $e_i\in P_1(p)$ is matched to $e_j$, then let $e_j=M_p(e_i)$.
\item This follows from step \ref{deltwo} of graph construction. Here any element removed from $S(i)$ in any step is deleted from $P_2(p)$(for one of the $p$'s). Hence such an element could belong to $P_2$ of at-most $k-1$ graphs and be matched at-most $k-1$ times(from $P_2$'s side). Note that any element $e\in OPT-S(n)$ could additionally be matched from the $P_1$'s side.
\end{enumerate}
\end{proof}
\subsubsection{Charge Transfer}
We finally explain the exact way the charge is transferred in each step.
\begin{itemize}
\item Consider step 14 in the algorithm. If $w_{e_i}\geq r\cdot (\sum_{j=1}^k w_{e_{i_j}})$, then transfer all of $\mathsf{ch_2}(e_{i_j},i-1)$ for each $e_{i_j}$ to $\mathsf{ch_2}(e_i,i)$.
\item Consider step 14 in the algorithm. If $w_{e_i}\geq r\cdot (\sum_{j=1}^k w_{e_{i_j}})$, then let $C_{e_i}=\{e_{i_1},\ldots,e_{i_k}\}$ be deleted from $S$ and $e_i$ is included into it. If $e_{i_l}$ was re-added to $P_1(l)$, then transfer all of $\mathsf{ch_1}(e_{i_l},0)$ to $\mathsf{ch_2}(M_l(e_{i_l}),t)$ (where $t$ is either the step $M_l(e_{i_l})$ is deleted or n).
\item Consider step 14 in the algorithm. Let $w_{e_i}<r\cdot (\sum_{j=1}^k w_{e_{i_j}})$. Additionally if $e_i\in OPT$, then it would have been added to $P_1(l)$ of each graph $l$. Now $e_i$ is matched to different nodes in different graphs. Transfer a portion of $\mathsf{ch_1}(e_i,0)$ to $\mathsf{ch_2}(M_l(e_i),t)$ which is proportional to $w_{e_{i_l}}$ for each graph $l$. (where $t$ is either the step $M_l(e_i)$ is deleted or n).
\item Note that the above transfer of charges does not violate causal consistency as the transfer of charge happens from $e$ to some element in $S(i)$ of the future.
\end{itemize}
We finish the proof of Lemma \ref{mainlemma} by analyzing the charge transfer. First note that any element in $\mathcal{E}-S(n)$ receives $\mathsf{ch_1}$ transfer in step 2 or 3 of charge transfer at-most $k-1$ times. This is by property \ref{prop:four} of Lemma \ref{properties}. Additionally we can also see that each $\mathsf{ch_1}$ transfer to element $e_i$ is at-most $r\cdot w_{e_i}$. Using these properties by Induction that we prove that $\mathsf{ch_2}(e_i,j)\leq \frac{(k-1)r^2}{r-1} w_{e_i}$ for $e_i\in \mathcal{E}-S(n)$ and $\mathsf{ch}(e_i,j)\leq \frac{(k\cdot r-1)r}{r-1} w_{e_i}$ for $e_i\in S(n)$.
\begin{itemize}
\item For any element in $e_i\in \mathcal{E}-S(n)$ we have $\mathsf{ch_2}(e_i,j)\leq \frac{(k-1)r^2}{r-1} w_{e_i}$ if it was deleted at step $j$. Note that $\mathsf{ch_1}$ transfer happens atmost $k-1$ times for $e_i\in \mathcal{E}-S(n)-OPT$ and $0$ times for $e_i\in \mathcal{E}-S(n)\cap OPT$
\begin{eqnarray}
\mathsf{ch_2}(e_i,j)\leq& (k-1)\cdot (\mathsf{ch_1}\ \textrm{transfer})+\mathsf{ch_2}\ \textrm{transfer} \nonumber \\
\leq& (k-1)\cdot r\cdot w_{e_i}+\sum_{j=1}^k \mathsf{ch_2}(e_{i_j},i-1) \nonumber \\
\leq& (k-1)\cdot r\cdot w_{e_i}+\sum_{j=1}^k \frac{(k-1)r^2}{r-1} w_{e_{i_j}} \nonumber \\
=& (k-1)\cdot r\cdot w_{e_i}+\frac{(k-1)r^2}{r-1}\sum_{j=1}^k  w_{e_{i_j}} \nonumber \\
\leq& (k-1)\cdot r\cdot w_{e_i}+\frac{(k-1)r^2}{r-1}\frac{w_{e_i}}{r} \nonumber \\
=& \frac{(k-1)r^2}{r-1} w_{e_i}
\end{eqnarray}
\item For any element in $S(n)-OPT$ we have $\mathsf{ch}(e_i,n)\leq \frac{(k\cdot r-1)r}{r-1} w_{e_i}$. Note that this is also true for any element in $S(n)\cap OPT$, as they do not get any $\mathsf{ch_1}$ transfer but have a non-zero $\mathsf{ch_1}$.
\begin{eqnarray}
\mathsf{ch}(e_i,n)\leq& k\cdot (\mathsf{ch_1}\ \textrm{transfer})+\mathsf{ch_2}\ \textrm{transfer} \nonumber \\
\leq& k\cdot r\cdot w_{e_i}+\sum_{j=1}^k \mathsf{ch_2}(e_{i_j},i-1) \nonumber \\
\leq& k\cdot r\cdot w_{e_i}+\sum_{j=1}^k \frac{(k-1)r^2}{r-1} w_{e_{i_j}} \nonumber \\
=& k\cdot r\cdot w_{e_i}+\frac{(k-1)r^2}{r-1}\sum_{j=1}^k  w_{e_{i_j}} \nonumber \\
\leq& k\cdot r\cdot w_{e_i}+\frac{(k-1)r^2}{r-1}\frac{w_{e_i}}{r} \nonumber \\
=& \frac{(k\cdot r-1)r}{r-1} w_{e_i}
\end{eqnarray}
\end{itemize}
The above argument proves that 
\begin{itemize}
\item $\forall e_i\in S(n)$, $\mathsf{ch}(e_i,n)\leq \frac{(k\cdot r-1)r}{r-1} w(e_i)$
\item $\sum_{e_i\in S(n)} \mathsf{ch}(e_i,n)=w(OPT)$. This follows naturally from charge conservation in the system.
\end{itemize}
The proof of Lemma \ref{mainlemma} can be easily seen from the above two properties.

\section{Lower Bound}\label{s:lowerbound}
\subsubsection{Sketch} Here we prove a matching lower bound of $c=k(1+f)(1+\sqrt{1-\frac{1}{k(1+f)}})^2$ for the competitive ratio. Assume $\mathcal{A}$ is an online deterministic algorithm which achieves a competitive ratio $\beta< c$. Then we arrive at a contradiction. The proof will be in following steps.
\begin{enumerate}
\item We will construct a k-dimensional matching. Using this we will argue the existence of an infinite sequence $X=\{x_1,x_2,\ldots\}$ of the following form. $x_1=1$ and $x_i>0,\forall i$. Additionally they will satisfy the following inequality.
\begin{equation} \label{ineq}
\beta (x_i-f\cdot \sum_{j=1}^{i-1}x_j)\geq x_{i+1}+(k-1)\sum_{j=1}^{i+1} x_j, \forall i\geq 1
\end{equation}
\item Consider any sequence $X=\{x_1,x_2,\ldots\}$ which satisfies $x_i>0,\forall i$ and $\beta (x_i-f\cdot \sum_{j=1}^{i-1}x_j)\geq x_{i+1}+(k-1)\sum_{j=1}^{i+1} x_j,\forall i$. Now if $\beta< c$, we arrive at a contradiction. 
\end{enumerate}
We once again note that the important new idea here is in first step. The first step needs to get the exact matroid structures right so as to pin the tight inequality. The lower bound given in \cite{BHK-ec09} for $k=1$ is just a uniform matroid of rank 1 while for $k>1$ we need intersection of partition matroids which has a lot more structure. Given the part 1 in our construction part 2 follows more easily from techniques in \cite{BHK-ec09}. We will prove the second part first.

\subsection{Contradiction}
Consider all sequences of the form $x_1=1$,$x_i>0,\forall i$ and satisfying equation \ref{ineq}. We first claim that if a sequence of the above form exists, then there should exist a sequence where inequality \ref{ineq} is equality $\forall i$. Assume by contradiction that such a sequence does not exist. For any given sequence X, let $n(X)$ be the minimum $i$ for which inequality \ref{ineq} is strict. Among the sequences consider sequence $X$ for which $n(X)=N$ is as large as possible. We construct a sequence for which $n(X)$ is even larger thus arriving at a contradiction. Let $\lambda$ be defined as follows.
\begin{equation}
\lambda=\frac{\beta (x_N-f\cdot \sum_{j=1}^{N-1}x_j)-(k-1)\sum_{j=1}^{N} x_j}{k\cdot x_{N+1}}
\end{equation}
Let $X'=\{x_1,x_2,\ldots,x_N,\lambda x_{N+1},\lambda_{N+2}x_{N+2},\ldots\}$. Then it is easy to see that inequality \ref{ineq} and other constraints are met. Additionally $n(X')>n(X)$. Hence we arrive at a contradiction to the non-existence of a sequence when inequality \ref{ineq} is equality $\forall i$. Let $Y=\{y_1,y_2,\ldots\}$($y_i\geq 0$) be the sequence such that 
\begin{eqnarray}
y_1=1,\ \ \ \ \ \ \ \ \ \ \  \beta (y_i-f\cdot \sum_{j=1}^{i-1}y_j)= y_{i+1}+(k-1)\sum_{j=1}^{i+1} y_j,\forall i\geq 1
\end{eqnarray}
Let $z_i=\sum_{j=1}^i y_i$. Then we get the recurrence $k\cdot z_{i+1}=(1+\beta)z_i-\beta (1+f)z_{i-1}$ and $z_1=1$. As each $y_i\geq 0$ we also have that $z_i\geq 0$. Now we use a Lemma from \cite{Positivity} to get a contradiction. 

\begin{lemma} \cite{Positivity} \label{contra}
Let $u_n=au_{n-1}+bu_{n-2}$ be a linear recurrence of second order. If $p(x)=x^2-ax-b$ has imaginary roots, then the sequence must have negative elements.
\end{lemma}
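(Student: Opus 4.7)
\textbf{Proof plan (Lemma~\ref{contra}).} The plan is to diagonalize the recurrence using its complex characteristic roots and read off the sign pattern from the resulting closed form. Since the roots of $p(x) = x^2 - ax - b$ are non-real, they form a complex conjugate pair $\alpha, \bar\alpha = \rho e^{\pm i\theta}$ with $\rho > 0$ and $\theta \in (0, \pi)$. Every real-valued solution of the recurrence can then be written as
\[
u_n = \rho^n\bigl(A \cos(n\theta) + B \sin(n\theta)\bigr) = R \rho^n \cos(n\theta - \phi),
\]
for real constants $A, B$ (equivalently amplitude $R \geq 0$ and phase $\phi$) determined by $u_0, u_1$. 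The case $R = 0$ corresponds to the identically-zero sequence and is trivial; otherwise the sign of $u_n$ agrees with that of $\cos(n\theta - \phi)$.

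Next I would show that $\cos(n\theta - \phi) < 0$ for some (in fact infinitely many) $n \geq 1$, which immediately yields $u_n < 0$. The key geometric observation is that consecutive arguments $n\theta - \phi$ on the circle $\mathbb{R}/2\pi\mathbb{Z}$ are separated by the fixed step $\theta \in (0, \pi)$. Because this step is strictly less than $\pi$, the orbit cannot jump over the length-$\pi$ arc $(\pi/2, 3\pi/2) \pmod{2\pi}$ on which cosine is negative, so within every block of $\lceil 2\pi/\theta \rceil$ consecutive indices at least one angle falls inside that arc. Restarting the argument at arbitrarily late indices gives infinitely many negative terms.

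The main (and essentially only) subtle point is the no-skipping claim: one must rule out the orbit hopping from $\cos \geq 0$ directly past the negative arc into $\cos \geq 0$ again, which would require a step exceeding $\pi$; since $\theta < \pi$ this is immediate, and no dichotomy between rational and irrational $\theta/\pi$ is needed. For the paper's use, the hypothesis $\beta < c$ makes the discriminant of the characteristic polynomial of $k z_{i+1} = (1+\beta) z_i - \beta(1+f) z_{i-1}$ negative, so Lemma~\ref{contra} applies to the nontrivial sequence $(z_i)$ (recall $z_1 = 1$ forces $R \neq 0$), producing a negative $z_i$ and contradicting the assumption $z_i \geq 0$; this closes the lower bound.
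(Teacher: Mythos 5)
Your proof is correct, but note that the paper does not actually prove this lemma at all --- it is imported as a black box from the cited reference \cite{Positivity} and used only to derive the contradiction with $z_i \geq 0$. So there is no in-paper argument to compare against; what you have supplied is a complete, elementary, self-contained proof that the paper omits. Your two steps are both sound: the closed form $u_n = R\rho^n\cos(n\theta - \phi)$ is the standard real form of the solution when the characteristic roots are a non-real conjugate pair $\rho e^{\pm i\theta}$ with $\theta \in (0,\pi)$ strictly (the endpoints $\theta = 0,\pi$ are excluded precisely because the roots are non-real), and the no-skipping argument is airtight: if $\cos(n\theta-\phi) \geq 0$ then the angle lies in an arc of length $\pi$, and a forward step of size $\theta < \pi$ cannot carry it across the open negative arc of length $\pi$ back into the nonnegative arc, so the angles would have to increase without bound while staying in a bounded interval --- a contradiction. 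This cleanly avoids any equidistribution or rational-versus-irrational case split. One small point of precision: the lemma as literally stated is false for the identically-zero sequence (your $R=0$ case), which has no negative elements; you correctly observe that this does not arise in the application since $z_1 = 1$ forces $R \neq 0$, but it would be cleaner to state the lemma with the hypothesis that the sequence is not identically zero rather than calling that case ``trivial.'' Your verification that $\beta < c$ makes the discriminant $(1+\beta)^2 - 4k\beta(1+f)$ negative matches the paper's use of the lemma exactly.
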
 

Consider the case when $\beta< k(1+f)(1+\sqrt{1-\frac{1}{k(1+f)}})^2$. Then it is simple to see that $D=(1+\beta)^2-4\cdot k\cdot \beta(1+f))<0$ which implies that $k\cdot x^2=(1+\beta)x-\beta(1+f)$ has imaginary roots. Hence by Lemma \ref{contra} this implies that the sequence $\{z_i\}$ has negative elements which is a contradiction to our assumption.

\subsection{Construction of sequence}
\begin{figure}
\begin{center}

\input{gen.pstex_t}

\end{center}
\caption{\label{example}Example $k$ dimensional matching. Step before adding $ey_7$,$ex_8$}
\end{figure}

Given the online algorithm with competitive ratio $\beta$ we construct $k$ dimensional matching using which we will construct the sequence $\{x_i\}$. For ease of description we will restrict to the case $k=2$. The general $k$ case is similar. In other words we construct bi-partite graphs. Here each partite corresponds to a partition matroid. Hence the subset of edges is in the intersection of the two matroids if and only it forms a valid matching. The graph will have two types of edges $ex_i$ and $ey_i$. We will use $x_i$ and $y_i$ to denote the corresponding weights of the edges. 
\begin{itemize}
\item Start with edge $ex_1$ with weight $x_1=1$.
\item For simplicity we just state the inductive step in the construction. At step $i$ the edge held by the algorithm is $ex_i$(and no other edge).
\item At step $i+1$ we add edges to both ends of $ex_i$ such that they differ in weight by atmost $\epsilon$ and the algorithm accepts exactly 1 of them. Due to the matching condition it has to reject the currently accepted edge $ex_i$. Here adding a new edge to one end of $ex_i$ means a edge is revealed. This new edge shares one vertex with $ex_i$ and the other vertex is a brand new vertex which hasn't yet appeared in the algorithm. We describe how this is done below.

At step $i+1$ add edges to both ends of $ex_i$ of weight $\epsilon$. If the algorithm does not accept either of the new edges rewind the algorithm and instead add edges of weight $2\epsilon$. Do this rewind, add edges of higher weights(higher by $\epsilon$) till the algorithm accepts exactly 1 new edge. Due to the matching constraint this means the current edge $ex_i$ sharing an end point must be canceled and a penalty paid to it. Let the accepted edge be named $ex_{i+1}$ and the other newly added edge be named $ey_i$. By construction we have $x_{i+1}\leq y_i+\epsilon$. 
\item Use the above construction to construct an infinite sequence $x_i$ and $y_i$. An example construction is shown in figure \ref{example}.

\end{itemize}
We will note some properties of the sequence $x_i$ and $y_i$.
\begin{itemize}
\item At step $i+1$ the algorithm accepts $ex_{i+1}$ and cancels $ex_i$ while not accepting $ey_i$. Consider the rewinding procedure in which the algorithm is presented edge $e'$ of weight $x_{i+1}-\epsilon$ and $ey_i$ before $ex_{i+1}$ is presented. By the construction of our rewinding procedure both $e'$ and $ey_i$ would not be accepted and $ex_i$ would still be the currently maintained edge in the solution. In such a case we assert some statements based on which we derive an inequality.
\begin{itemize}
\item The utility of the algorithm is $x_i-f\sum_{j=1}^{i-1}x_{i-1}$ as $ex_i$ is the currently maintained edge and $\{ex_1,...,ex_{i-1}\}$ are currently canceled edges.
\item It is clear that $\{ey_1,ey_2,\ldots,ey_{i},e'\}$ is a valid matching in the current set of revealed edges. This has total weight $x_{i+1}-\epsilon+\sum_{j=1}^i y_i$
\item By definition of $\beta$ we have $\beta (x_i-f\sum_{j=1}^{i-1}x_{i-1})\geq x_{i+1}-\epsilon+\sum_{j=1}^i y_i$.
\item By construction of the sequence we also know that $y_i+\epsilon\geq x_{i+1}$.
\end{itemize}
Substituting we get $\beta (x_i-f\sum_{j=1}^{i-1}x_{i-1})\geq x_{i+1}-\epsilon+\sum_{j=2}^{i+1} x_{j}-(i+1)\epsilon$. Tending $\epsilon$ to $0$ \footnote{We are ignoring some issues of convergence for ease of exposition} we get $\beta (x_i-f\sum_{j=1}^{i-1}x_{i-1})\geq x_{i+1}+\sum_{j=2}^{i+1} x_{j}$
\item Note that the sequence constructed in not the one we desired. The sum on the right hand side starts from $j=2$. $\beta (x_i-f\sum_{j=1}^{i-1}x_{i-1})\geq x_{i+1}+\sum_{j=2}^{i+1} x_{j}$ implies $\beta (x_i-f\sum_{j=2}^{i-1}x_{i-1})\geq x_{i+1}+\sum_{j=2}^{i+1} x_{j}$. So from the sequence $\{x_1,x_2,...\}$ deleting $x_1$ and rescaling $x_2$ to 1 we get the desired sequence.
\end{itemize}
The general case k is very similar to case $k=2$ but instead involves construction of $k$-dimensional matching.

\section{Extending to arbitrary downward closed set systems.}\label{s:extensions}
We extend the algorithm in section \ref{s:preliminaries} to arbitrary downward closed set systems. One way is to represent the set system as a intersection of $k$ matroids and use the algorithm in previous sections. But even the algorithm for single item from \cite{BHK-ec09} gives a competitive ratio $n\cdot (1+2f+2\sqrt{f(1+f)})$. A simple modification can be used to improve it to $(n-1)(1+2f+2\sqrt{f(1+f)})$. As we show next this is essentially the best that can be done.
\begin{theorem}
Even for the case $f=0$ the competitive ratio cannot be better than $n-1$.
\end{theorem}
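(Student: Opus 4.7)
The plan is to prove the lower bound by an adversarial construction: assume for contradiction that some deterministic online algorithm $\mathcal{A}$ achieves competitive ratio $\beta < n-1$ on every downward closed set system with $n$ elements, and exhibit a particular set system and arrival sequence on which $\mathcal{A}$'s ratio exceeds $\beta$. The overall shape of the argument will mirror the lower bound in Section~\ref{s:lowerbound} for the $k$-matroid case: isolate the instance that forces the tight tradeoff, translate $\beta$-competitiveness into a chain of inequalities on a sequence of weights, and invoke a positivity/linear-recurrence argument in the spirit of Lemma~\ref{contra} to rule out $\beta < n-1$.

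Concretely, I would use the ``split'' set system on $\mathcal{E}=\{e_1,\ldots,e_n\}$ whose two maximal feasible sets are $\{e_1\}$ and $\{e_2,\ldots,e_n\}$: the decoy $e_1$ is pairwise incompatible with every other element, while the cluster $e_2,\ldots,e_n$ is fully compatible. This is downward closed but not a matroid (exchange fails between $\{e_1\}$ and any two cluster elements). The adversary presents $e_1$ first with weight $1$, which $\mathcal{A}$ must accept (else an immediate halt gives ratio $\infty$). It then presents $e_2,e_3,\ldots$ one at a time with weights $w_2,w_3,\ldots$ selected adaptively in response to $\mathcal{A}$'s decisions, retaining the option of halting after any step. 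Since $\{e_1,e_j\}$ is infeasible for $j\ge 2$, at each arrival $\mathcal{A}$ must either irrevocably reject the leaf and keep $e_1$, or cancel $e_1$ and commit to the cluster.

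Let $k$ be the first step at which $\mathcal{A}$ cancels $e_1$, with $k=\infty$ if it never does. When $k=\infty$, taking every $w_i=1-\varepsilon$ gives $\mathcal{A}$ final value $1$ and $\OPT=(n-1)(1-\varepsilon)$, so the ratio tends to $n-1$. When $k<\infty$, $\mathcal{A}$ has irrevocably rejected $e_2,\ldots,e_{k-1}$; by tuning $w_k$ and then halting, the adversary pins $\mathcal{A}$'s attainable final weight small against the $\OPT$ on the first $k$ elements. Under the assumption $\beta<n-1$, matching the two branches produces a sequence $\{w_i\}$ that must be simultaneously positive and satisfy a linear recurrence encoding $\beta$-competitiveness at every step, exactly in the spirit of the $\{x_i\}$ sequence of Section~\ref{s:lowerbound}; a linear-recurrence positivity argument then yields the contradiction.

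The technically delicate step, as in the $k$-matroid case, is the switching branch. A naive one-round analysis of the split construction only yields a $\sqrt{n-1}$ bound, because an ``eager'' algorithm can preempt $e_1$ at the first leaf and still collect the rest of the cluster. Closing the gap to $n-1$ will require chaining the construction across several rounds---conceptually, replaying a fresh decoy/cluster layer each time $\mathcal{A}$ switches, with carefully rescaled weights so the adversary's halting threat remains credible---so that the algorithm pays a cumulative multiplicative loss of $n-1$ regardless of how eagerly it cancels. Getting the recurrence to match $n-1$ exactly, rather than $\sqrt{n-1}$ or some logarithmic factor, is the step I would expect to require the most care.
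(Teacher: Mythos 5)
Your instance is the right one---the ``split'' system with decoy $e_1$ and cluster $e_2,\ldots,e_n$ is exactly the paper's graph ($N_1$ joined to every other vertex), and your no-switch branch matches the paper's main calculation. The gap is in the switching branch, which you explicitly leave open: the multi-round ``chaining'' you sketch is both unnecessary and unlikely to close the gap, since each fresh decoy/cluster layer either spends ground-set elements or plants conflicts inside the would-be optimum (the preempted centers form a path/clique among themselves), which degrades the bound to roughly $n-s$ or $n/2$ rather than recovering $n-1$; and the linear-recurrence positivity machinery of Lemma~\ref{contra} is tailored to the unbounded-horizon matroid construction, not to a finite ground set of size $n$.

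The reason you land at $\sqrt{n-1}$ is that you are tuning the leaf weight $w_k$ to balance the two branches against the single target $n-1$. The paper does not balance: every leaf has weight $1-\epsilon$, essentially equal to the decoy's. If the algorithm ever cancels $N_1$ to accept a leaf at step $i$, the adversary halts on the spot; the algorithm then holds weight $1-\epsilon$ while the optimum on the revealed elements is already $1+(i-2)(1-\epsilon)$, so its ratio is $(i-1)+\epsilon/(1-\epsilon)$, strictly worse than $m-1$ where $m=i$ is the number of elements in that (truncated) instance. Hence an algorithm whose competitive ratio beats $m-1$ as a function of the instance size can never preempt $N_1$, and the never-switch branch on the full star gives ratio tending to $n-1$. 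The ``eager'' algorithm you worry about never gets to collect the cluster, because the input ends the instant it preempts; no recurrence, rescaling, or multi-layer construction is needed. (The one caveat worth stating explicitly in a write-up is that the switching branch produces an instance of size $i\le n$, so the bound is really on the competitive ratio as a function of instance size---which is also how the paper's one-sentence claim that the algorithm ``can never reject $N_1$'' must be read.)
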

\begin{proof}
The proof is by constructing a set system for which every algorithm achieves competitive ratio worse then $n-1$. Consider the downward closed set system $\mathcal{I}$ which is the set of independent sets in a graph. Now we construct the graph as follows. At every step a node in the graph arrives and reveals its weight and edges to existing set of vertices. First node $N_1$ and $N_2$ with weights $1$ arrive and have a edge between them. Without loss of generality assume that the algorithm $\mathcal{A}$ accepts $N_1$ and rejects $N_2$. Next at every step a node $N_i$ arrives with weight $1-\epsilon$ and edge to $N_1$. The algorithm can never reject $N_1$ and accept a new $N_i$ without violating the fact that its competitive ratio if less than $n-1$. But at the end $OPT=\{N_2,N_3,\ldots,N_n\}$ with weight $1+(n-2)(1-\epsilon)$ while the weight held by algorithm is $1$. Hence the competitive ratio is at-least $1+(n-2)(1-\epsilon)$. Tending $\epsilon->0$ we get that competitive ratio is at-least $n-1$. 

\end{proof}

\section{Acknowledgments}
The author thanks Robert D. Kleinberg, Renato Paes Leme and Hu Fu for useful comments and suggestions on an earlier draft of the paper.

{\footnotesize
\bibliographystyle{splncs}
\bibliography{buyback}
}
\end{document}